\newcommand{\ie}{\emph{i.e.,\ }}
\newcommand{\namedref}[2]{\hyperref[#2]{#1~\ref*{#2}}}
\newcommand{\sectionref}[1]{\namedref{Section}{#1}}
\newcommand{\theoremref}[1]{\namedref{Theorem}{#1}}
\newcommand{\figureref}[1]{\namedref{Figure}{#1}}
\newcommand{\lemmaref}[1]{\namedref{Lemma}{#1}}
\newcommand{\namedrefeq}[2]{\hyperref[#2]{#1~\mbox{\rm(\ref*{#2})}}}
\newcommand{\equationref}[1]{\namedrefeq{Eq.}{#1}}
\newcommand{\remarkref}[1]{\namedref{Remark}{#1}}
\newcommand{\definitionref}[1]{\namedref{Definition}{#1}}
\newcommand{\hide}[1]{}
\newcommand{\tb}{\makebox[0.4cm]{}}
\newcommand{\due}{\makebox[0.8cm]{}}
\newcommand\A{\mathcal{A}}
\newcommand\C{\mathcal{C}}
\newcommand\N{N}
\newcommand\norm[1]{\left|\left|#1\right|\right|}
\newcommand\normi[1]{\left|\left|#1\right|\right|_\infty}
\newsavebox{\theorembox}
\newsavebox{\lemmabox}
\newsavebox{\conjecturebox}
\newsavebox{\claimbox}
\newsavebox{\factbox}
\newsavebox{\corollarybox}
\newsavebox{\propositionbox}
\newsavebox{\examplebox}
\savebox{\theorembox}{\bf Theorem}
\savebox{\lemmabox}{\bf Lemma}
\savebox{\conjecturebox}{\bf Conjecture}
\savebox{\claimbox}{\bf Claim} \savebox{\factbox}{\bf Fact}
\savebox{\corollarybox}{\bf Corollary}
\savebox{\propositionbox}{\bf Proposition}
\savebox{\examplebox}{\bf Example}
\newcommand{\ignore}[1]{}
\newcommand{\syncAlg}{\textsc{SS-Iterative}\xspace}
\newcommand{\asyncAlg}{\textsc{Async-SS-Iterative}\xspace}
\newcommand{\linenumber}[1]{{\tt #1}}
\newcommand{\lineref}[1]{Line~\linenumber{#1}}
\newcommand{\uu}{\mathbf{u}}
\newcommand{\vv}{\mathbf{v}}
\newcommand{\cc}{\mathbf{c}}
\newcommand{\xx}{\mathbf{x}}
\newcommand{\bb}{\mathbf{b}}
\def\squarebox#1{\hbox to #1{\hfill\vbox to #1{\vfill}}}
\newcommand{\D}{\mathcal{D}}
\newcommand{\Ir}[1]{I(#1)}
\newcommand{\Or}[1]{O(#1)}
\newcommand{\II}[2]{I_{#1}(#2)}
\newcommand{\OO}[2]{O_{#1}(#2)}
\begin{document}

\thispagestyle{empty}

\title{Self-stabilizing Numerical Iterative Computation}



\author{Ezra N. Hoch\inst{1}, Danny Bickson\inst{2} and Danny Dolev\inst{1}\thanks{Part of the work was done
while the author visited Cornell university. The work was
funded in part by ISF.}}


\institute{School of Computer Science and Engineering\\
The Hebrew University of Jerusalem\\
Jerusalem 91904, Israel\\
\and
IBM Haifa Research Lab\\
Mount Carmel\\Haifa 31905, Israel\\
}

\maketitle

\begin{abstract}
Many challenging tasks in sensor networks, including sensor
calibration, ranking of nodes, monitoring, event region detection,
collaborative filtering, collaborative signal processing, {\em etc.}, can
be formulated as a problem of solving a linear system of
equations. Several recent works propose different distributed
algorithms for solving these problems, usually by using linear iterative numerical
methods.

In this work, we extend the settings of the above approaches, by
adding another dimension to the problem. Specifically, we are
interested in {\em self-stabilizing} algorithms, that continuously run
and converge to a solution from any initial
state. This aspect of the problem is highly important due to the dynamic
nature of the network and the frequent changes in the measured
environment.

In this paper, we link together algorithms from two different
domains. On the one hand, we use the rich linear algebra
literature of linear iterative methods for solving systems of
linear equations, which are naturally distributed with rapid
convergence properties. On the other hand, we are interested in
self-stabilizing algorithms, where the input to the computation is
constantly changing, and we would like the algorithms to converge
from any initial state. We propose a simple novel method called
\syncAlg as a self-stabilizing variant of the linear iterative
methods. We prove that under mild conditions the self-stabilizing
algorithm converges to a desired result. We further extend
these results to handle the asynchronous case.

As a case study, we discuss the sensor calibration problem and
provide simulation results to support the applicability of our
approach.
\end{abstract}

\section{Introduction}
Many challenging tasks in sensor networks, for example distributed ranking algorithms
of nodes and data items \cite{PPNA08}, collaborative filtering \cite{KorenCF},
localization \cite{SensorLocalization}, collaborative signal
processing \cite{SignalProcessing}, region detection
\cite{RegionDetection}, etc., can be formulated as a problem of solving a linear system of
equations. Several recent works \cite{SensorLocalization},
\cite{SignalProcessing},\cite{RegionDetection} propose different distributed
algorithms for solving these problems, usually by linear iterative numerical
methods.

In this work, we extend the settings of the above approaches by
adding another dimension to the problem. Specifically, we are
interested in {\em self-stabilizing} algorithms, that continuously run
and converge to a solution from any initial
state. This aspect of the problem is highly important due to the dynamic
nature of the network and the frequent changes in the measured
environment.

As a case study, we show that the calibration of local sensors' readings can be formulated as a
linear system of equations $A\xx = \bb$, where $\xx$ represents
the calibrated output reading, $\bb$ represents the local reading, and $A$
represents a weighted communication graph. However, our work is general and
can be applied to any problem that can be formulated as a distributed solution to a
linear system of equations, including the previous works mentioned above.

Consider a distributed system of sensors measuring real-world
data. Sensors are located in different areas; for example, the
senors are spread throughout a building and they measure the
temperature to adjust the heating or cooling. We would like the
collected data to be as reliable as possible, reflecting closely the changing
environmental conditions. One of the obstacles we face when designing
algorithms that collect data from a sensor network are measurement errors.
There are two main types of inaccuracies of sensors' measurements: noisy environment
and sensing equipment which is not calibrated. It is desirable
that sensors could execute a distributed algorithm for calibrating
their environmental readings. In this setting sensors are allowed to
communicate among themselves, using data from other nodes to
affect their reported individual reading. Furthermore, we would like our
calibration algorithm to have fault-tolerance properties.
Specifically, we are interested in self-stabilizing algorithms
\cite{DolevSSBook} which converge to an optimal solution from
any initial state. Observe that self-stabilization helps also in deploying the sensors.
There is no need to explicitly synchronize the sensors, once enough of them
are deployed and begin functioning the results will converge to the expected value.

The main challenge we have faced in this work, is that in the
classical linear algebra literature, $\bb$ is assumed to be constant.
In our settings, the environment is constantly changing and the
computed algorithm never terminates, leading to constantly
changing values of $\bb$. In this paper, we ask the following question: ``Is it possible to
devise a self-stabilizing numerical iterative method?'' We answer
affirmatively, and show that under minor conditions it is possible
to devise a self-stabilizing algorithm that solves a dynamic
system of linear equations, where the input to the system is
constantly changing.

To the best of our knowledge, this is the
first work tackling this challenging problem. We believe that our
approach can have numerous applications in the field of
distributed self-stabilizing computation.

Other works discuss fault tolerance aspects of distributed
computation. For example, overcoming faults in sensors by
averaging the input was investigated in~\cite{SensorFailures}
providing a centralized algorithm. Quantifying faulty nodes'
effect on the system's output is discussed in~\cite{BinaryRept}
and~\cite{MultiValRept}. These papers consider bounded input paths
and their effect on the stability of the output. In
\cite{RepCons-PODC08} infinite input paths are considered under
the assumption that only specific sensors' input may change. All
three papers consider discrete input values, as opposed to a
continuous set of input values discussed in this paper.

The paper is constructed as follows.
\sectionref{sec:model} defines the model and problem definition. \sectionref{sec:solution}
presents our novel algorithm \syncAlg. \sectionref{sec:analysis}
analyzes our algorithm and gives bounds on the convergence rate.
\sectionref{sec:simlation} presents experimental results of
running \syncAlg using sample topologies. \sectionref{sec:async}
extends our construction to the asynchronous case. We conclude in
\sectionref{sec:discussion}.

\section{Model and Problem Definition}\label{sec:model}

We model the sensor calibration problem as follows. Given a directed
communication graph $G =(V,E)$, $V$ is the set of nodes $V =
\{p_1, \dots, p_n\}$, $E$ is the set of weighted edges connecting
them (weights can be negative) and $\N(p_i)$ denotes $p_i$'s neighbors. Edge weights are used to model the
directional dependence between nodes' outputs; \ie if $w_{p_i,p_j}=0$ then there is no edge between $p_i$ and $p_j$ and their output is not directly dependent on each other. In addition, we require a non-zero self
connected edge, $w_{p_i,p_i} \neq 0$, which represents the weight
of $p_i$'s own input.

Initially, we assume a synchronous system: during a single round of communication, any pair of connected
nodes may send a single message on each directed edge. Each round
$r$, each node $p_i$ has a scalar input value $\II{p_i}{r}$, which
represents the local reading of the sensor.\footnote{For simplicity
of notations we use scalar variables in the paper. An extension to
the vector case (where each sensor measures a set of measurements) is immediate.} In addition, $p_i$ outputs its
output value, which is denoted by $\OO{p_i}{r}$; both inputs and outputs are from the domain of real numbers. Denote by $\Ir{r}$ the
input vector of the entire system at round $r$, and by $\Or{r}$
the output vector of the system at the end of round $r$.
In \sectionref{sec:async} we relax the assumption of synchronous rounds
and provide a variant of the algorithm which works in asynchronous settings.

The schematic operation of each node $p_i$ at round $r$ is
composed of the following steps: (a) read the value of
$\II{p_i}{r}$; (b) send messages; (c) receive messages; and (d) do
some processing and output $\OO{p_i}{r}$. Then a new round is
started, and the nodes continue so forever.

\begin{definition}
  A {\bf configuration} $\C$ of the system at round $r$ consists of the state
  of each node prior to performing any operation at round $r$;
  this configuration is denoted by $\C(r)$.
\end{definition}

\begin{definition}
  An {\bf input sequence} $\mathcal{I}$ of length $\ell$ is a list of  $\ell$ vectors such that each $\vv \in \mathcal{I}$
  is a possible input vector of the system (\ie $\vv \in \mathfrak{D}$, the domain of allowed values).
  An {\bf output sequence} $\mathcal{O}$ of length $\ell$ is a list
  such that each $\vv \in \mathcal{O}$
  can potentially be an output vector of the system.
\end{definition}

\begin{definition}
  A step from configuration $\C$ to configuration $\C'$ on input vector
  $\vv$ is {\bf legal} if $\C'$ is reached from $\C$ by the
  system when having $\vv$ as the input vector.
  $\uu$ is {\bf produced} by a legal step if
  $\uu$ is the output vector of the system resulting from such a legal step.
\end{definition}

\begin{definition}
  A {\bf run} of a system on input sequence $\mathcal{I} = \{\vv(1), \dots, \vv(\ell)\}$ starting
  from configuration $\C(r)$ is the sequence $\C(r), \Or{r}, \C(r+1), \Or{r+1}, \dots$
  s.t. for any $i \geq 0$: the step from $\C(r+i)$ to $\C(r+i+1)$ on input vector
  $\vv(i+1)$ is legal, and  $\Or{r+i}$ is produced by that
  legal step. The system is said to produce the output sequence
  $\mathcal{O} = \{\Or{r}, \dots, \Or{r+\ell-1}\}$.
\end{definition}

In the special case when the sensor observations (the input to the
system) are fixed, the output decision of the sensors should converge to
a solution that preserves the linear relations among node inputs
and outputs. More formally, consider an input sequence
$\mathcal{I}$ of identical input vectors; \ie $\mathcal{I} =
\{\vv, \vv, \vv, \dots\}$. It is desired that for such an input a run
from any configuration $\C$ on $\mathcal{I}$ would end up
producing an output sequence $\mathcal{O} = \{\uu(1), \uu(2),
\dots\}$ such that $\norm{\uu(i)-\uu} \rightarrow 0$ as $i
\rightarrow \infty$, for a $\uu$ that solves the following linear
system of equations:
\begin{equation}\label{eq:Opdef}
  \uu_i =
  w_{p_i,p_i}\cdot \vv_i + \sum_{p_j \in \N(p_i)}w_{p_i, p_j}\cdot
  \uu_j\;.
\end{equation}
We assume that the above equations are uniquely solvable, denoting
$\uu$ as the solution to $\vv$.

One of the most efficient distributed approaches for solving a set
of linear equations of the type $A\xx = \bb$ is by using linear
iterative algorithms. Unlike Gaussian elimination, which has a
cost of $O(n^3),$ where $n$ is the number of variables, an
iterative algorithm usually solves a system of linear equations in
time of $O(n^2r,)$ where $r$ is the number of iterations, which is
typically logarithmic in $n$. These algorithms are naturally
distributed and work well in asynchronous settings. Furthermore,
when converging, the algorithms converge to a solution from {\em
any} initial state. An excellent survey of such methods is found
in \cite{BibDB:BookBertsekasTsitsiklis}.

The main novel contribution of this paper is in analyzing
the self-stabilizing properties of
algorithms from the linear iterative methods domain. In a practical setting, it is
highly unreasonable to assume that sensor readings do not change
over time. However, it is reasonable to assume that at steady
state the change in sensor readings is bounded. Informally, in
this work we show that once the input readings are bounded, the
output solution is bounded as well. This useful observation
enables us to tie together numerical iterative methods and
dynamically changing environments in a self-stabilizing manner.

The following definition bounds the change in sensor observations:
\begin{definition}\label{def:concentrated}
  An input sequence $\mathcal{I} = \{\vv(1), \vv(2), \dots, \vv(\ell)\}$ is $\delta$-{\bf bounded} around
  $\vv$ if for every $i$, $1 \leq i \leq \ell$, it holds that $\norm{\vv(i)-\vv}_\infty \leq
  \delta$.\footnote{$\norm{\xx}_\infty = \max_i \{ |\xx_i| \}$.}
\end{definition}

\definitionref{def:concentrated} states that a sequence $\mathcal{I}$ is $\delta$-bounded if all
the vectors in $\mathcal{I}$
are bounded within an $n$ dimensional hypercube with an edge $2
\delta$, centered around a point $\vv$. We note that once changes
in the input are not bounded, then no efficient algorithm
(especially in a network that is sparsely connected) can calculate
the output fast enough. For example, if the diameter of the
communication graph is $\D$, for some system of equations it would
take at least $\D$ rounds for the information exchange for input readings at
one side of the network to propagate to the other side of the
network.

\begin{definition}\label{def:runconverges}
  Let $\mathcal{I}$ be an input sequence that is $\delta$-bounded
  around $\vv$ and let $\uu$ be the solution to input $\vv$.
  A run from configuration $\C$ on input sequence $\mathcal{I}$ $\epsilon$-{\bf converges} to its solution if
  the produced output sequence $\mathcal{O} = \{\uu(1), \uu(2), \dots\, \uu(\Delta t)\}$ satisfies that
  $\normi{\uu(\Delta t)-\uu)}\leq \epsilon(\Delta t, \delta, \C)$; where $\epsilon$
  is a function of $\Delta t, \delta$ and $\C$.
\end{definition}

\definitionref{def:runconverges} requires that if - starting from
configuration $\C$ - the inputs are in an $n$ dimensional hypercube of
radius $\delta$ around $\vv$ then the output at time $\Delta t$ is
bounded within some $n$ dimensional hypercube around $\uu$ with
radius $\epsilon(\Delta t, \delta, \C)$. We aim at an
$\epsilon(\Delta t, \delta, \C)$ that decreases as $\Delta t$
increases, as long as the inputs are bounded by the same
$\vv$-centered, $\delta$-radius hypercube. Clearly, for $\delta
>0$, $\epsilon(\Delta t, \delta, \C) > 0$ for any $\Delta t$. That is, there
is some minimal radius $\delta' >0$ around $\uu$ s.t. we cannot
ensure a tighter bound.

The above definition considers a single initial configuration, and
a single input sequence $\mathcal{I}$. We are interested in an
algorithm that works for all initial configurations and all input
sequences.

\begin{definition}\label{def:alwaysconv}
  An algorithm $\A$ $\epsilon$-{\bf converges} for $\delta$-bounded input sequence $\mathcal{I}$ if every run (from any configuration) on
  $\mathcal{I}$,
  $\epsilon$-converges to its solution.  An algorithm $\A$ $\epsilon$-{\bf always converges} if for
  every $\delta$-bounded input sequence $\mathcal{I}$, $\A$ $\epsilon$-converges.
\end{definition}

\definitionref{def:alwaysconv} formally defines the problem at
hand, as an algorithm $\A$ that {\em always converges} has the
desired self-stabilizing property: for any system state, once the
sensors' readings changes are bounded, the change in output of the
entire system is bounded as well.

Our goal is to find an algorithm $\A$ that is $\epsilon$-always
converging for a provably ``good'' $\epsilon$. Moreover, we aim at
having $\A$ efficient also in its message complexity and
simplicity of code, allowing lightweight sensors to actually
implement it.

\section{Our Proposed Solution}\label{sec:solution}
An equivalent formulation of the update rule \equationref{eq:Opdef} is
\begin{equation*}
  \uu_i = w_{p_i,p_i}\cdot \vv_i + \sum_{j \neq i}w_{p_i, p_j}
  \cdot \uu_j\;.
\end{equation*}
The above equation states a condition on $p_i$'s output, in regard
to $p_i$'s inputs and $p_i$'s neighbors' output. Thus, it encapsulates the
requirement that different nodes influence each other's reported
readings, while taking into consideration their local readings as
well.

Since $w_{p_i, p_i} \neq 0$, the above equation can be stated as:
\begin{equation*}
  \frac{1}{w_{p_i,p_i}}\uu_i - \sum_{j \neq i}\frac{w_{p_i, p_j}}{w_{p_i,p_i}}
  \cdot \uu_j = \vv_i\;.
\end{equation*}

By denoting $w_{i,j} = -\frac{w_{p_i, p_j}}{w_{p_i,p_i}}$ (for $i \neq j$) and
$w_{i,i} = \frac{1}{w_{p_i,p_i}}$ we get:
\begin{equation}\label{eq:Opdefstandart}
  \sum_{j}w_{i,j} \cdot \uu_j = \vv_i\;.
\end{equation}

Let $W$ be the matrix that has $w_{i,j}$ as entries,
\equationref{eq:Opdefstandart} can be written in linear algebra
notation, (s.t. it applies to all nodes simultaneously):
\begin{equation} \label{eq:matrixform}
  W\uu = \vv\;.
\end{equation}

If we consider a non-self-stabilizing system in which the inputs
do not change (that is, the input is fixed to $\vv$), then
\equationref{eq:matrixform} can be seen as $A\xx=\bb$, where $A$
and $\bb$ are given. In such a case, we are interested in finding
the value of $\xx$, which is a vector of $n$ unknown variables.
However, we are interested in the case where $\vv$ changes over
time, and thus \equationref{eq:matrixform} does not describe the
problem properly, but rather helps in understanding the motivation
for our solution.

We use a modified update rule (relative to \equationref{eq:Opdef}):

\begin{equation}\label{eq:updaterule}
  \OO{p_i}{r+1} =
  w_{p_i,p_i}\cdot \II{p_i}{r+1} + \sum_{j \neq i}w_{p_i, p_j}
  \cdot \OO{p_j}{r}\;.
\end{equation}

Clearly, for the case of $\delta=0$, a $0$-bounded input sequence
$\mathcal{I}$, if $(\OO{p_i}{r+1}-\OO{p_i}{r}) \longrightarrow 0$
as $r \rightarrow \infty$ then \equationref{eq:updaterule}
converges to the solution of \equationref{eq:Opdef}. Thus, if the
update rule of \equationref{eq:updaterule} is executed
simultaneously by all nodes, and for all of the nodes
$(\OO{p_i}{r+1}-\OO{p_i}{r}) \longrightarrow 0$, then it also
solves \equationref{eq:matrixform}. That is, if each node locally
executes \equationref{eq:updaterule} then the global solution is
reached. This observation motivates algorithm \syncAlg in
\figureref{figure:syncAlg}.

\begin{figure*}[h]
\begin{center}
\begin{minipage}{4.8in}
\hrule \hrule \vspace{1.7mm} \footnotesize
\setlength{\baselineskip}{3.9mm} \noindent Algorithm \syncAlg
 \vspace{1mm} \hrule \hrule
\vspace{1mm}

\linenumber{01:} Each round {\bf do}:\hfill\textit{/* executed on
node $p_i$
*/}\\

\makebox[0.93cm]{} \textit{/* send current value of $O_{p_i}$ to all neighbors */}\\
\linenumber{02:} \tb {\bf for} each $p_j \in \N(p_i)$  \\
\linenumber{03:} \due {\bf send} $O_{p_i}$ to $p_j$; \\

\makebox[0.93cm]{} \textit{/* update $O_{p_i}$ according to values sent by neighbors */}\\
\linenumber{04:} \tb {\bf set} $O_{p_i} := w_{p_i,p_i}\cdot I_{p_i};$  \\
\linenumber{05:} \tb {\bf for} each value $O_{p_j}$ received:  \\
\linenumber{06:} \due {\bf update} $O_{p_i} := O_{p_i}  + w_{p_i,p_j}\cdot O_{p_j}$;\\

\linenumber{07:} {\bf od}.

\normalsize \vspace{1mm} \hrule\hrule
\end{minipage}
 \caption{A self-stabilizing iterative algorithm.}\label{figure:syncAlg}
 \end{center}
\end{figure*}

\begin{remark}
  In \syncAlg there is no notion of the ``current round number
  $r$''. That is, $p_i$ reads and writes to the variables $I_{p_i}$ and $O_{p_i}$
  without being ``aware'' of $r$. When we discuss the algorithm ``from the outside'', we will consider $\II{p_i}{r}$ and $\OO{p_i}{r}$ instead of
  just $I_{p_i}, O_{p_i}$.
\end{remark}

Consider $p_i$ is running at round $r+1$. When $p_i$ performs
\lineref{03}, it sends the value of $O_{p_i}$. The last time
$O_{p_i}$ was updated was at \lineref{04} and \lineref{06} of
round $r$. Thus, the value sent by $p_i$ at round $r+1$ is
actually $\OO{p_i}{r}$. Therefore, the values received from $p_j$
by $p_i$ and used to update $\OO{p_i}{r+1}$ are $\OO{p_j}{r}$.
However, the value read by $p_i$ in \lineref{04} is the value of
$\II{p_i}{r+1}$. Concluding that $p_i$ updates $\OO{p_i}{r+1}$
exactly according to \equationref{eq:updaterule}.

\begin{remark}\label{rem:partofcode}
  Each node $p_i$ must know the values of $w_{p_i, p_j}$ as ``part of the
  code''. Thus, these values cannot be subject to transient
  faults.
\end{remark}

\section{Analysis of \syncAlg}\label{sec:analysis}
\cite{BibDB:BookBertsekasTsitsiklis} shows that the update rule
\equationref{eq:updaterule} can be written in linear algebra form
as
\begin{equation}\label{eq:matrixUpdate}
  \Or{r+1} = A\Ir{r+1} + B\Or{r}\;,
\end{equation}
where $A$ is a diagonal matrix with $w_{p_i,p_i}$ in the main
diagonal, and $B_{ij} = w_{p_i, p_j}$ for $i\neq j$
\begin{equation} A \triangleq (diag\{W\})^{-1} \ \ , \ \
B\triangleq  -(AW - I_{n \times n}) \label{AB}
\end{equation} where $I_{n \times n}$ is the identity matrix. Using this update rule to solve a set of linear equations iteratively is  known as the Jacobi
algorithm.

As noted in \sectionref{sec:model}, when the input sequence is
constant (\ie $\Ir{r} = \vv$ for all $r$) the iterative
execution of the above equations converges to $\uu = A\vv
+B\uu$, which is the same as $\uu=W^{-1}\vv$, thus solving
\equationref{eq:matrixform}. Following, we analyze the result of iteratively applying
these equations for $\delta$-bounded input
sequences.

Let $\mathcal{I}$ be an input sequence of length $\ell$ that is
$\delta$-bounded around vector $\vv$. That is, $\mathcal{I} =
\Ir{r}, \Ir{r+1}, \dots, \Ir{r+\ell-1}$ for some round $r$. Note
that \syncAlg saves a single scalar variable at each node, and thus the
configuration of round $r+1$ can be defined by the value of
$\Or{r}$ at round $r$. Consider \syncAlg's run, starting from an
arbitrary configuration at round $r$. We aim at showing that
$\Or{r+\Delta t}$ is bounded by a hypercube centered at $\uu$. Denote
by $\cc(\Delta t) \triangleq \Or{r+ \Delta t}-\uu$. If we show
that $\normi{\cc(\Delta t)}$ is bounded (as $\Delta t$ increases),
then $\Or{r+\Delta t}$ is within a bounded hypercube centered at $\uu$.
Consider $\cc(1)$:
\begin{eqnarray} \label{eq:baseInd}
  \cc(1) & = & \Or{r+1}-\uu \nonumber\\
         & = & A\Ir{r+1} + B\Or{r} -(A\vv+B\uu) \nonumber\\
         & = & A(\Ir{r+1}-\vv) + B(\Or{r}-\uu) \nonumber\\
         & = & A(\Ir{r+1}-\vv) + B\cc(0)\;.
\end{eqnarray}
Since $\mathcal{I}$ is a $\delta$-bounded input sequence around
$\vv$, each $\Ir{r+\Delta t}$ can be denoted as
$\vv+\D(r+\Delta t)$ s.t. $\D(r+\Delta t) \in
\mathbb{R}^n$ is a vector, and $\norm{\D(r+\Delta t)}_\infty
\leq \delta$. That is, $\D(r+\Delta t)=\Ir{r+\Delta t}-\vv$.


\begin{claim}
At round $r+\Delta t$, it holds that $\cc(\Delta t)= \sum_{j=0}^{\Delta t-1} B^jA\D(r+\Delta t-j) + B^{\Delta t}\cc(0)$.
\end{claim}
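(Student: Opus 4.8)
The plan is to prove the claim by induction on $\Delta t$, using the linear recurrence \equationref{eq:matrixUpdate} together with the fixed-point identity $\uu = A\vv + B\uu$ noted earlier in this section. The base case $\Delta t = 1$ is already in hand: the derivation culminating in \equationref{eq:baseInd} gives $\cc(1) = A\D(r+1) + B\cc(0)$, which is exactly the claimed formula specialized to $\Delta t = 1$, namely the single summand $B^0 A\,\D(r+1)$ plus the term $B^1\cc(0)$.

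For the inductive step I would first isolate the general one-step recurrence, rather than rederiving everything from scratch each time. Repeating the algebra of the base case at an arbitrary round, \equationref{eq:matrixUpdate} gives $\Or{r+\Delta t+1} = A\Ir{r+\Delta t+1} + B\Or{r+\Delta t}$; subtracting $\uu = A\vv + B\uu$ and recalling $\D(r+\Delta t+1) = \Ir{r+\Delta t+1}-\vv$ yields $\cc(\Delta t+1) = A\,\D(r+\Delta t+1) + B\,\cc(\Delta t)$. This reduces the claim to the routine unrolling of a first-order linear recurrence whose forcing term is $A\,\D(\cdot)$.

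Substituting the inductive hypothesis for $\cc(\Delta t)$ and distributing $B$ through the sum produces $A\,\D(r+\Delta t+1) + \sum_{j=0}^{\Delta t-1} B^{j+1} A\,\D(r+\Delta t - j) + B^{\Delta t+1}\cc(0)$. Reindexing the middle sum by $k = j+1$ rewrites it as $\sum_{k=1}^{\Delta t} B^{k} A\,\D(r+\Delta t+1-k)$, and the leading term $A\,\D(r+\Delta t+1)$ is precisely the missing $k=0$ summand; absorbing it extends the sum to $\sum_{k=0}^{\Delta t} B^{k} A\,\D(r+\Delta t+1-k)$, which together with $B^{\Delta t+1}\cc(0)$ is exactly the claimed expression evaluated at $\Delta t+1$.

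The only place demanding care is this reindexing: one must check that the argument of $\D$ shifts correctly, so that the freshly generated $A\,\D(r+\Delta t+1)$ term aligns as the $k=0$ term of the telescoped sum, and that the standalone coefficient $B^{\Delta t}$ on $\cc(0)$ is correctly promoted to $B^{\Delta t+1}$. Everything else is straightforward matrix arithmetic, and there is no analytic difficulty here, since the statement is a purely algebraic identity and not yet a convergence bound on $\normi{\cc(\Delta t)}$.
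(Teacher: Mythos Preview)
Your proposal is correct and follows essentially the same approach as the paper: induction on $\Delta t$ with base case \equationref{eq:baseInd}, deriving the one-step recurrence $\cc(\Delta t+1)=A\,\D(r+\Delta t+1)+B\,\cc(\Delta t)$ from \equationref{eq:matrixUpdate} and the fixed-point identity, then substituting the inductive hypothesis and reindexing the sum. The paper's own proof is organized identically, so there is nothing substantive to add.
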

\begin{proof}
  Proof by induction. The base of the induction was shown for $\cc(1)$; see \equationref{eq:baseInd}. Assume that the claim holds for $\Delta t =
  k$. Thus, $\cc(k)= \sum_{j=0}^{k-1} B^jA\D(r+k-j) + B^k\cc(0)$. By the update rule in
  \equationref{eq:matrixUpdate}, we have that $\Or{r+k+1} = A\Ir{r+k+1}
  + B\Or{r+k}$. Combining the two equations implies
  \begin{eqnarray*}\label{eq:norm}
    \cc(k+1) & = & \Or{r+k+1} - \uu \\
           & = & A\Ir{r+k+1} + B\Or{r+k} - (A\vv+B\uu)\\
           & = & A(\Ir{r+k+1}-\vv) + B(\Or{r+k} - \uu)\\
           & = & A\D(r+k+1) + B\cc(k)\\
           & = & A\D(r+k+1) + \sum_{j=0}^{k-1} B^{j+1}A\D(r+k-j) + B^{k+1}\cc(0)\\
           & = & A\D(r+k+1) + \sum_{j=1}^{k} B^jA\D(r+k+1-j) + B^{k+1}\cc(0)\\
           & = & \sum_{j=0}^{k} B^jA\D(r+k+1-j) + B^{k+1}\cc(0)\;.
  \end{eqnarray*}
  Thus, if the claim holds for $\Delta t=k$ it also holds for $\Delta t=k+1$;
  and we have that the claim holds for all $\Delta t \geq 0$.
  \qed
\end{proof}

\begin{definition}
  A matrix $M_{n \times n}$ is {\bf diagonally dominant} if
  $|M_{ii}| > \Sigma_{j \ne i}^n |M_{ij}| $. A matrix $M_{n \times n}$ is {\bf normalized} diagonally dominant (normalized, for short) if
  $M$ is diagonally dominant, and $|M_{ii}| \geq 1$.
\end{definition}

\begin{lemma}\label{claim:diag}
  For a normalized diagonally dominant matrix $W$, it holds that
  $\normi{A} \leq 1$ and $\normi{B} < 1,$ where $A,B$ are defined in \equationref{AB} and $\normi{A} \triangleq \max_{x \neq 0}\frac{\normi{Ax}}{\normi{x}}$.
\end{lemma}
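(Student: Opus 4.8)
The plan is to reduce both claims to the elementary fact that the induced norm $\normi{M} = \max_{x \neq 0} \normi{Mx}/\normi{x}$ equals the maximum absolute row sum $\max_i \sum_j |M_{ij}|$. I would first recall this characterization (or justify it in one line: the supremum over vectors of unit $\infty$-norm is achieved by the sign pattern of the maximal row, which makes every term in that row add constructively). Once this is in place, both bounds follow immediately from reading off the row sums of $A$ and $B$, so the heart of the proof is simply computing those entries.

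For $A$ the computation is trivial. Since $A = (diag\{W\})^{-1}$ is diagonal with entries $1/w_{i,i}$, each row of $A$ has a single nonzero entry, so the $i$-th absolute row sum is $1/|w_{i,i}|$. The normalization hypothesis $|w_{i,i}| \geq 1$ then gives $1/|w_{i,i}| \leq 1$ for every $i$, whence $\normi{A} = \max_i 1/|w_{i,i}| \leq 1$.

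For $B$ I would first make the entries explicit. Writing $B = -(AW - I) = I - AW$ and using that $A$ is diagonal, one gets $(AW)_{ij} = w_{i,j}/w_{i,i}$; in particular the diagonal entries of $AW$ are all equal to $1$, so $B_{ii} = 0$ and $B_{ij} = -w_{i,j}/w_{i,i}$ for $i \neq j$. The $i$-th absolute row sum of $B$ is therefore $\frac{1}{|w_{i,i}|}\sum_{j \neq i}|w_{i,j}|$, and diagonal dominance $|w_{i,i}| > \sum_{j \neq i}|w_{i,j}|$ makes this quantity strictly less than $1$ for each fixed $i$.

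The one point that deserves care — and the place where the strictness of $\normi{B} < 1$ actually lives — is the passage from the per-row strict inequalities to a strict bound on their maximum. A supremum of quantities each strictly below $1$ need not itself be below $1$; the conclusion is saved here only because there are finitely many rows, so the maximum is attained at some index $i_0$ and inherits the strict inequality $\frac{1}{|w_{i_0,i_0}|}\sum_{j \neq i_0}|w_{i_0,j}| < 1$. I would state this finiteness argument explicitly rather than let it pass silently, since it is the only non-mechanical step in the whole lemma.
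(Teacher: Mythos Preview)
Your proof is correct and follows essentially the same route as the paper: compute the diagonal entries of $A$ and the off-diagonal entries of $B$ explicitly, then bound each absolute row sum using $|w_{i,i}|\geq 1$ and the strict diagonal dominance inequality, respectively. Your version is in fact slightly more careful than the paper's, which passes from ``$\normi{B\xx}<\normi{\xx}$ for every $\xx$'' directly to $\normi{B}<1$ without the finiteness remark you supply; invoking the row-sum formula for $\normi{\cdot}$ and noting that the maximum over finitely many rows is attained closes exactly the gap you identified.
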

\begin{proof}
  $A$ is zero except for its
  main diagonal for which $A_{i,i} = w_{p_i, p_i} =\frac{1}{w_{i,i}}$. Since $|W_{ii}|
  \geq 1$, we have that $|A_{i,i}| \leq 1$. Thus, it holds that $\norm{A\xx}_\infty \leq \norm{\xx}_\infty$.
  Furthermore,
  $\max_{\xx \neq 0}\frac{\norm{A\xx}_\infty}{\norm{\xx}_\infty}
  \leq
  1$, \ie $\norm{A}_\infty \leq 1$. Regarding $B$, $B_{i,j} = w_{p_i, p_j}$ for $i
  \neq j$ and 0 for $i=j$. Since $W$ is assumed to be normalized diagonally dominant, we have
  that $\sum_{j \ne i} |W_{i,j}|< |W_{i,i}|$, thus $\sum_{j \ne i}
  |w_{p_i,p_j}|< 1$. Therefore, $\sum_j |B_{i,j}| = \sum_{j \ne i} |w_{p_i,p_j}|<
  1$ for all $i$. In total, for any $\xx$ we have
  $\norm{B\xx}_\infty < \norm{\xx}_\infty$, leading to $\norm{B}_\infty <
  1$.
\qed\end{proof}

If $W$ is a diagonally dominant matrix then node $p_i$'s own input effects $p_i$'s output more than the sum of all of $p_i$'s neighbors outputs. That is, the weight of $p_i$'s input is at least the sum of weights of $p_i$'s neighbors outputs.

\begin{theorem}\label{theorem:main}
  Given a normalized diagonally dominant and invertible $W$, there are constants $c_1,c_2,$ where $c_1 > 0,$ and $ 1 > c_2 > 0$,
  such that \syncAlg $\epsilon$-always converges with
  $\epsilon(\Delta t, \delta, \C) = \delta \cdot c_1 + (c_2)^{\Delta t} \cdot \normi{\Or{r}-\uu}$.
\end{theorem}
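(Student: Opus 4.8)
The plan is to bound $\normi{\cc(\Delta t)}$ directly from the closed form established in the preceding claim, $\cc(\Delta t) = \sum_{j=0}^{\Delta t-1} B^j A \D(r+\Delta t-j) + B^{\Delta t}\cc(0)$, and then read off $c_1$ and $c_2$ from the resulting estimate. First I would apply the triangle inequality together with the submultiplicativity of the induced operator norm $\normi{\cdot}$ (so that $\normi{B^j}\le\normi{B}^j$ and $\normi{A\xx}\le\normi{A}\normi{\xx}$) to obtain
\[
  \normi{\cc(\Delta t)} \le \sum_{j=0}^{\Delta t-1} \normi{B}^{\,j}\,\normi{A}\,\normi{\D(r+\Delta t-j)} + \normi{B}^{\,\Delta t}\,\normi{\cc(0)}.
\]

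Next I would invoke \lemmaref{claim:diag}, which gives $\normi{A}\le 1$ and $\normi{B}<1$, and use the hypothesis that $\mathcal{I}$ is $\delta$-bounded around $\vv$, so that $\normi{\D(r+\Delta t-j)}\le\delta$ for every index. Setting $c_2 \triangleq \normi{B} \in (0,1)$, the estimate collapses to
\[
  \normi{\cc(\Delta t)} \le \delta \sum_{j=0}^{\Delta t-1} c_2^{\,j} + c_2^{\,\Delta t}\,\normi{\cc(0)} \le \frac{\delta}{1-c_2} + c_2^{\,\Delta t}\,\normi{\cc(0)},
\]
where the last step bounds the finite geometric sum by $1/(1-c_2)$. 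Choosing $c_1 \triangleq 1/(1-c_2) > 0$ and recalling $\cc(0)=\Or{r}-\uu$ yields exactly $\normi{\Or{r+\Delta t}-\uu} \le \delta\cdot c_1 + (c_2)^{\Delta t}\normi{\Or{r}-\uu}$, which is the stated $\epsilon(\Delta t,\delta,\C)$.

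To finish, I would argue that this establishes the \emph{always converges} property of \definitionref{def:alwaysconv}: the derived bound uses only \lemmaref{claim:diag} and the $\delta$-bound, it holds for every round $r$, and it depends on the initial configuration $\C$ solely through the term $\normi{\Or{r}-\uu}$. Hence the same $c_1,c_2$ work for every run from every configuration and for every $\delta$-bounded input sequence, as required by \definitionref{def:alwaysconv}.

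I do not expect a genuine obstacle: once the closed form from the claim is available, the argument is a routine norm estimate driven by $\normi{B}<1$. The points that deserve care are the norm bookkeeping --- using $\normi{\cdot}$ consistently as the induced matrix/vector norm so that submultiplicativity and compatibility are legitimate --- and confirming $c_2\in(0,1)$ strictly (in the degenerate case $\normi{B}=0$ one simply replaces $c_2$ by any fixed value in $(0,1)$, which only weakens the already-valid bound). It is also worth noting what the result does \emph{not} promise: since $c_1>0$, for $\delta>0$ the output cannot be forced within radius less than $c_1\delta$ of $\uu$, consistent with the observation following \definitionref{def:runconverges}.
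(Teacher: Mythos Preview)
Your proposal is correct and follows essentially the same route as the paper: bound $\normi{\cc(\Delta t)}$ from the closed form via the triangle inequality, submultiplicativity, \lemmaref{claim:diag}, and the geometric-series estimate, then set $c_2=\normi{B}$ and $c_1$ from the resulting constant. The only cosmetic difference is that the paper retains the factor $\normi{A}$ and sets $c_1=\normi{A}/(1-\normi{B})$, whereas you absorb $\normi{A}\le 1$ a step earlier and take $c_1=1/(1-\normi{B})$; both choices satisfy the theorem.
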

\begin{proof}
  By \lemmaref{claim:diag} it holds that $\norm{A}_\infty \leq 1$ and $\norm{B}_\infty <
  1$. Consider a $\delta$-bounded input sequence $\mathcal{I}$ around $\vv$, and
  \syncAlg's run starting from an arbitrary state $\Or{r}$.
  We are interested in the behavior of $\normi{\cc(\Delta t)}$:
\begin{eqnarray}\label{eq:normck}
  \normi{\cc(\Delta t)} & = & \normi{\sum_{j=0}^{\Delta t-1} B^jA\D(r+\Delta t-j) + B^{\Delta t}\cc(0)} \nonumber\\
              & \leq & \normi{\sum_{j=0}^{\Delta t-1} B^jA\D(r+\Delta t-j)} + \normi{B^{\Delta t}\cc(0)} \nonumber\\
              & \leq & \sum_{j=0}^{\Delta t-1} \normi{B}^j\normi{A\D(r+\Delta t-j)} + \normi{B}^{\Delta t}\normi{\cc(0)} \nonumber\\
              & \leq & \delta \cdot \normi{A}\sum_{j=0}^{\Delta t-1} \normi{B}^j + \normi{B}^{\Delta t}\normi{\cc(0)} \nonumber \\
              & = & \delta \cdot \normi{A} \frac{1-\normi{B}^{\Delta t}}{1-\normi{B}} + \normi{B}^{\Delta t}\normi{\cc(0)}\;.
\end{eqnarray}
For an input sequence $\mathcal{I}$ that is $\delta$-bounded around
$\vv$, denote by $\uu$ the solution to the original system of
equations $W\uu = \vv$. By \equationref{eq:norm},
\begin{equation*}
   \normi{\Or{r+\Delta t}-\uu} \leq \delta \cdot \normi{A} \frac{1-\normi{B}^{\Delta t}}{1-\normi{B}} +
  \normi{B}^{\Delta t}\normi{\cc(0)}.
\end{equation*}
  Since $\normi{B} < 1$, we have that
  $\frac{1-\normi{B}^{\Delta t}}{1-\normi{B}}\leq
  \frac{1}{1-\normi{B}}$ and by setting $c_1 =
  \frac{\normi{A}}{1-\normi{B}}$ it holds that $\normi{A}
  \frac{1-\normi{B}^{\Delta t}}{1-\normi{B}} \leq c_1$. By setting $c_2 =
  \normi{B}$ and recalling that $\cc(0) = \Or{r}-\uu$ we are done.
\qed\end{proof}

\theoremref{theorem:main} states sufficient conditions s.t. \syncAlg $\epsilon$-always
converges. Moreover, the algorithm \syncAlg is lightweight, as it
requires nodes to send only a single value to every neighbor on each
round.

\section{Experimental Results}\label{sec:simlation}
For illustrating the behavior of our proposed algorithm,
we have simulated \syncAlg using two sample topologies of one hundred nodes.
\figureref{figA} depicts a circular topology where each node is
connected to its left and right neighbors. \figureref{figB} shows a
random unit disc graph, where nodes are randomly spread on a
plane, and each node is connected to the nodes that are within a
distance of 1.  The X-axis shows the number of iterations, and the Y-axis shows the value of
$\delta$. Area colors in the heatmap depict the
average of the following procedure: randomly select a vector
$\vv$ and a $\delta$-bounded sequence around $\vv$,
run the simulation for the randomly selected values and return the
$L_\infty$ distance between the last output vector and $\uu$
(calculated as $\uu = W^{-1}\vv$). The heatmap uses a $\log \log$ scale. Both graphs clearly show that as $\delta$ decreases
and the number of iterations increases, the output of \syncAlg
converges to be bounded by a small hypercube around $\uu$.

Note that the unit disc weighted topology matrix is characterized by
$\normi{A}=0.02,\normi{B}=0.97$ while the circle graph
is characterized by $\normi{A}=0.33,\normi{B}=0.66$. As expected,
using unit disc topology requires a larger number of iterations
for convergence (depends on $\normi{B}$). In addition, in the unit
disc topology the value of $\delta$ has a lesser effect on the
convergence, due to the value of $\normi{A}$, which affects the minimal radius around the output.
Since $\normi{A}$ is smaller in the unit disc topology,
increasing $\delta$ does not significantly affect the convergence.

\begin{figure}[t!]
\begin{minipage}[b]{0.5\linewidth}
\centering
  \hspace{1.25in} \includegraphics[scale=0.4]{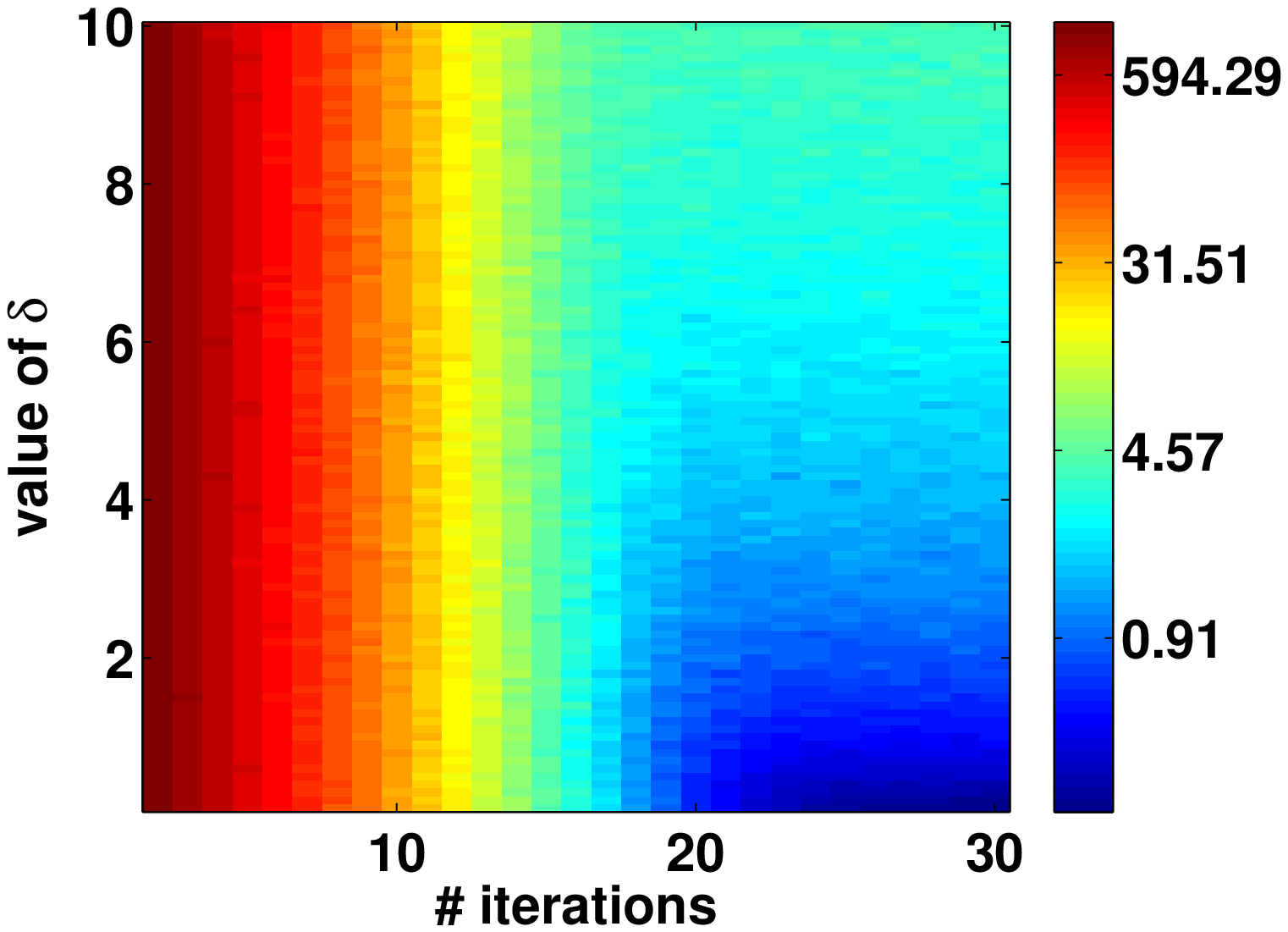}
  \vspace{-0.5cm}
  \caption{Sim. of a Circle graph.}
  \label{figA}
\end{minipage}
\begin{minipage}[b]{0.5\linewidth}
\centering
  \hspace{1.25in} \includegraphics[scale=0.4]{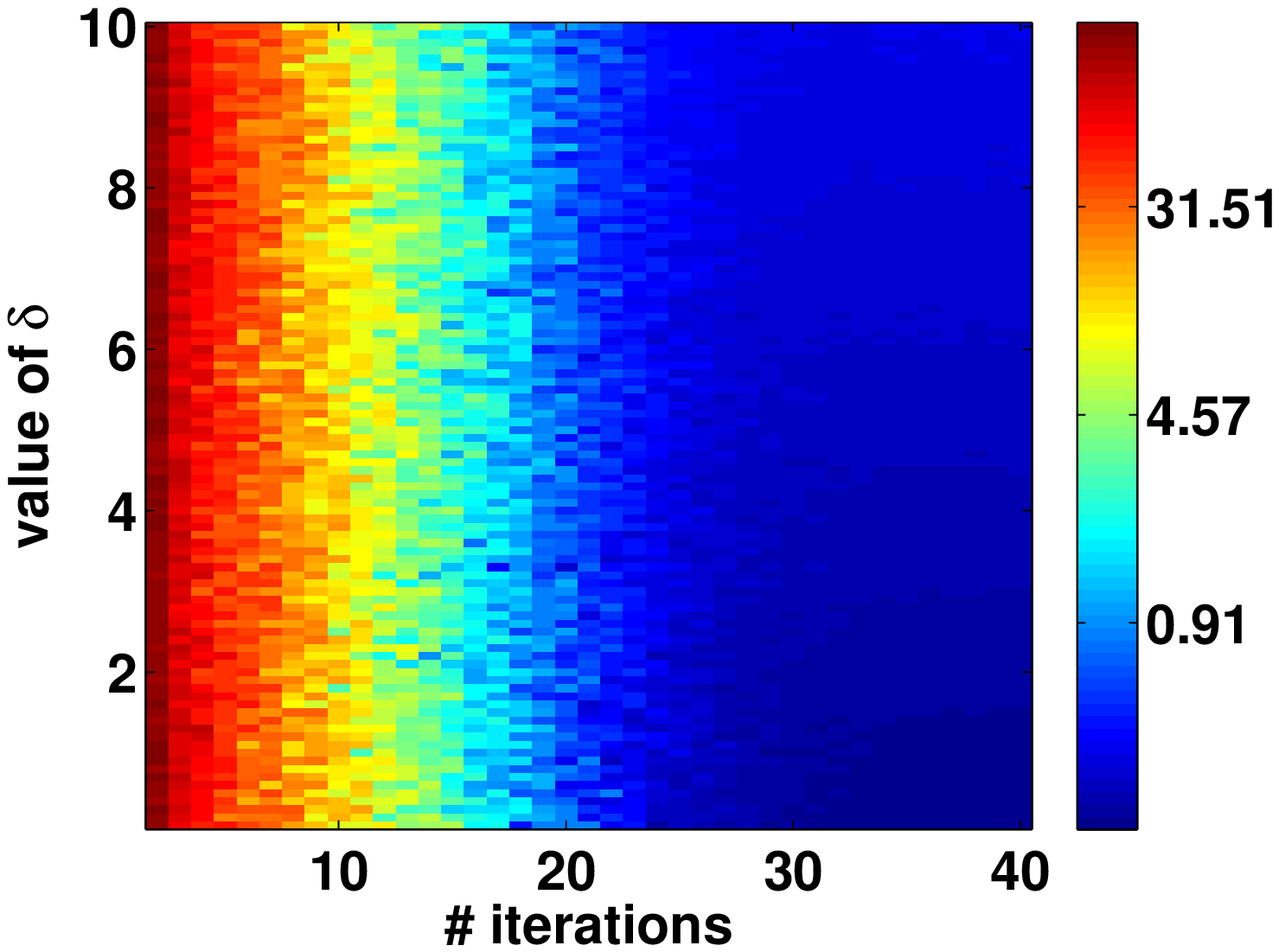}
  \vspace{-0.5cm}
  \caption{Sim. of a Unit-Disc graph.}
  \label{figB}
\end{minipage}
\end{figure}

\section{Extension to the asynchronous model}\label{sec:async}
Our second novel contribution is in extending our model to support
asynchronous communications. In a large sensor network, it is
unreasonable to assume that the sensors operate in synchronous
rounds. Furthermore, as known from the linear iterative algorithms
literature, algorithms usually converge in less asynchronous rounds (when compared to synchronous rounds).

When considering the asynchronous model, it is more convenient to
discuss shared-memory as means of communication.\footnote{In
\cite{DolevSSBook} it is shown how to convert an algorithm based
on shared-memory to a message-passing algorithm with links of
bounded capacity.} Thus, assume that for each directed edge
between $p_i, p_j$ there is a read-write register $R_{p_i,p_j}$
that is written by $p_i$ and read by $p_j$.

An asynchronous run is an infinite sequence of configurations $\C_0
\rightarrow \C_1 \rightarrow \dots$ such that some process $p$
performs an atomic step between configuration $\C_i$ and
$\C_{i+1}$. An atomic step consists of reading or writing from a
single register. Notice that in the current model a configuration
consists of all of the registers and of the local variables at the
different nodes.

In this section we again prove that starting from an arbitrary
configuration, when the inputs are bounded, the outputs are
bounded as well. We consider each configuration $\C_r$ to be
assigned a vector input $\Ir{r}$ such that if node $p_i$ reads the
input when performing an atomic step on $\C_r$ it reads the value
of $\II{p_i}{r}$. Equivalently, the output vector of configuration
$\C_r$ is $\Or{r}$.

\figureref{figure:asyncAlg} outlines \asyncAlg which is a direct
translation of \syncAlg to the shared-memory model.

\begin{figure*}[h]
\begin{center}
\begin{minipage}{4.8in}
\hrule \hrule \vspace{1.7mm} \footnotesize
\setlength{\baselineskip}{3.9mm} \noindent Algorithm \asyncAlg
 \vspace{1mm} \hrule \hrule
\vspace{1mm}

\linenumber{01:} Forever {\bf do}:\hfill\textit{/* executed on
node $p_i$
*/}\\

\makebox[0.93cm]{} \textit{/* write current value of $O_{p_i}$ to all neighbors */}\\
\linenumber{02:} \tb {\bf for} each $p_j \in \N(p_i)$  \\
\linenumber{03:} \due {\bf write} $O_{p_i}$ to $R_{p_i,p_j}$; \\

\makebox[0.93cm]{} \textit{/* update $O_{p_i}$ according to values of neighbors */}\\
\linenumber{04:} \tb {\bf set} $O_{p_i} := w_{p_i,p_i}\cdot I_{p_i};$  \\
\linenumber{05:} \tb {\bf for} each $p_j \in \N(p_i)$:  \\
\linenumber{06:} \due {\bf read} $R_{p_j,p_i}$ into $temp$;  \\
\linenumber{07:} \due {\bf update} $O_{p_i} := O_{p_i}  + w_{p_i,p_j}\cdot temp$;\\

\linenumber{08:} {\bf od}.

\normalsize \vspace{1mm} \hrule\hrule
\end{minipage}
 \caption{A self-stabilizing iterative algorithm for asynchronous networks.}\label{figure:asyncAlg}
 \end{center}
\end{figure*}

\asyncAlg consists of two phases: in the first, the previous value of $O_{p_i}$ is
written to all its neighbors. In the second phase $p_i$ calculates
its new value of $O_{p_i}$ by reading the registers of all its
neighbors.

We consider only ``fair'' runs, in which each node performs an
atomic step infinitely many times. Thus, each node performs both
phases infinitely many times. A round is defined to be the shortest
prefix of a run such that each node has performed lines \linenumber{02}-\linenumber{07} in the
algorithm. We number each atomic step and each round. Note that a round consists of many
atomic steps.

We model a fair run as follows. Each node $p_i$ performs
infinitely many atomic steps, and participates in infinitely many
rounds. Notice that the registers $p_i$ reads in round $k+1$ have
all been last written to, no earlier than during round $k$. Since a
round consists of each node performing all the steps in the
algorithm, each node $p_i$ manages to read all of its neighboring
registers and write to all of its neighboring registers every
round. Thus, there is some atomic step $r$ (during round $k+1$)
such that:
%
%
\begin{equation*}
  \OO{p_i}{r} =
  w_{p_i,p_i}\cdot \II{p_i}{r'} + \sum_{j \neq i}w_{p_i, p_j}
  \cdot \OO{p_j}{r_j'}\;,
\end{equation*}
where $r', r_j'$ (for all $p_j \neq p_i$) are smaller
than $r$ and are from at least round $k$.

Let $\uu$ be such that $\uu = A\vv + B\uu$, and let the inputs be
from a $\delta$-bounded input sequence around $\vv$. Denote $\cc(r) = \Or{r} - \uu$ and $z = \max_i |\cc_{p_i}(0)|$.

\begin{theorem}\label{theorem:main2}
  Given a normalized diagonally dominant and invertible $W$, and while considering only fair runs,
  there are constants $c_1,c_2,$ where $c_1 > 0,$ and $ 1 > c_2 > 0$,
  such that \asyncAlg $\epsilon$-always converges with
  $\epsilon(\Delta t, \delta, \C) = \delta \cdot c_1 + (c_2)^{\Delta t} \cdot z$; where $\Delta t$ counts the asynchronous rounds of a fair run.
\end{theorem}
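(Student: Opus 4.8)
The plan is to mirror the contraction argument behind \theoremref{theorem:main}, but to replace the clean round-by-round matrix recursion $\Or{r+1}=A\Ir{r+1}+B\Or{r}$ --- which no longer holds verbatim under asynchrony --- by a scalar contraction on a suitably chosen supremum. First I would take the asynchronous per-node update displayed just before the statement, $\OO{p_i}{r}=w_{p_i,p_i}\II{p_i}{r'}+\sum_{j\neq i}w_{p_i,p_j}\OO{p_j}{r_j'}$, and subtract the fixed-point identity $\uu_i=w_{p_i,p_i}\vv_i+\sum_{j\neq i}w_{p_i,p_j}\uu_j$ coming from $\uu=A\vv+B\uu$. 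This yields, for each node, $\cc_{p_i}(r)=w_{p_i,p_i}\D_i(r')+\sum_{j\neq i}w_{p_i,p_j}\cc_{p_j}(r_j')$. Taking absolute values and using $|w_{p_i,p_i}|\le\normi{A}$ and $\sum_{j\neq i}|w_{p_i,p_j}|\le\normi{B}$ (both supplied by \lemmaref{claim:diag}) together with $|\D_i(r')|\le\delta$, I obtain the pointwise bound $|\cc_{p_i}(r)|\le\normi{A}\,\delta+\normi{B}\cdot\max_j|\cc_{p_j}(r_j')|$.

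The crucial structural fact I would exploit is already recorded in the text: every register read during round $k+1$ was last written no earlier than round $k$, so all the terms $\cc_{p_j}(r_j')$ feeding an update in round $\ge k+1$ refer to output values produced in round $k$ or afterwards. This motivates defining, for each round index $k$, the suffix supremum $s_k:=\sup\{\,|\cc_{p_i}(v)|:v\text{ is an output value current at some atomic step in round }k\text{ or later}\,\}$. The merit of this definition is that it absorbs both possibilities for an asynchronous read --- a stale value from the previous round and a fresh value already overwritten in the current round --- since both are values from round $\ge k$ and are therefore bounded by $s_k$. Feeding this into the pointwise bound, every value produced by an update in round $\ge k+1$ satisfies $|\cc_{p_i}|\le\normi{A}\,\delta+\normi{B}\,s_k$, and taking the supremum over all such values gives the scalar contraction $s_{k+1}\le\normi{A}\,\delta+\normi{B}\,s_k$.

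Before iterating the recurrence I would dispatch a short bootstrapping step to guarantee $s_0<\infty$. By induction on the atomic-step index, no update can push a deviation above $\max(z,b)$ with $b:=\normi{A}\,\delta/(1-\normi{B})$: if every value read satisfies $|\cc|\le\eta$ for some $\eta\ge b$, then the freshly produced value obeys $\normi{A}\,\delta+\normi{B}\,\eta\le\eta$, and the initial values already satisfy $|\cc_{p_i}(0)|\le z$. Hence $s_0\le\max(z,b)$. In the principal case $z\ge b$ this reads $s_0\le z$, and unrolling $s_{k+1}\le\normi{A}\,\delta+\normi{B}\,s_k$ gives $s_{\Delta t}\le\normi{B}^{\Delta t}s_0+\normi{A}\,\delta\sum_{j=0}^{\Delta t-1}\normi{B}^j\le\delta\cdot\frac{\normi{A}}{1-\normi{B}}+\normi{B}^{\Delta t}\,z$, exactly the claimed bound with $c_1=\normi{A}/(1-\normi{B})$ and $c_2=\normi{B}$ --- the same constants as in \theoremref{theorem:main}; the degenerate case $z<b$ is immediate, since then all deviations stay $\le b\le\delta c_1+c_2^{\Delta t}z$. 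Since the current output of each node after $\Delta t$ rounds is a value from round $\ge\Delta t$, its deviation is at most $s_{\Delta t}$, and the theorem follows.

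The step I expect to be the main obstacle --- and the only place the argument genuinely departs from the synchronous proof --- is isolating the right quantity to contract. Because asynchronous reads mix values from the current and the previous round, a naive per-round deviation runs into a circular dependency (a round-$k$ read may itself be a round-$k$ value), and the matrix-power expansion of the Claim preceding \theoremref{theorem:main} is unavailable. The suffix supremum $s_k$ is engineered precisely to break this circularity, since the set of round-$\ge k$ values is exactly the set from which the round-$(k+1)$ reads are drawn; getting the bookkeeping of rounds versus atomic steps right, including the finiteness bootstrap for $s_0$, is where the care is needed.
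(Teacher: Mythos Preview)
Your approach differs from the paper's and contains a real gap. You define the suffix supremum $s_k$ over all output values \emph{current} at some atomic step in round $\ge k$, and then claim the contraction $s_{k+1}\le\normi{A}\,\delta+\normi{B}\,s_k$. But $s_{k+1}$ is not, as you assert when passing to the sup, the supremum over values ``produced by an update in round $\ge k{+}1$'': it also contains every value carried over from round $k$. Concretely, at the first atomic step of round $k{+}1$ the value $\OO{p_i}{v}$ is the one computed at $p_i$'s last update in round $k$, whose reads were from registers written in round $\ge k{-}1$, hence bounded only by $s_{k-1}$, not $s_k$. A two-line counterexample with $\delta=0$: two nodes, $\normi{B}=1/2$, initial outputs $(100,0)$ and zero registers; then $s_0=s_1=100$ because the value $100$ is still current at the first step of round~$1$, whereas your inequality would force $s_1\le 50$. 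What your pointwise bound actually yields is $s_{k+1}\le\max\bigl(\normi{A}\,\delta+\normi{B}\,s_k,\;s_k\bigr)=s_k$, which is no contraction.

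The paper sidesteps this by never aggregating into a round-indexed supremum. It fixes a single value $\OO{p_i}{r}$ at the step where $p_i$ completes an update in round $k{+}1$ and unrolls a backward dependency chain $r>r_1>\cdots>r_\ell=0$, using at each link that a read performed in round $m$ returns a value current in round $\ge m{-}1$; since each link drops at most one round, $\ell\ge k$, and the geometric bound $|\cc_{p_i}(r)|\le\delta\sum_{j<\ell}\normi{B}^j+\normi{B}^\ell z$ follows. Your idea can be salvaged --- for instance by redefining $s_k$ as the supremum over values whose computation (line~07) completed in round $\ge k$, or by accepting the two-step recursion $s_{k+1}\le\normi{A}\,\delta+\normi{B}\,s_{k-1}$ with the weaker constant $c_2=\normi{B}^{1/2}$ --- but as written the sentence ``taking the supremum over all such values gives $s_{k+1}\le\ldots$'' conflates the set of values current in round $\ge k{+}1$ with the strictly smaller set of values produced there, and the argument breaks at that step.
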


\begin{proof}
 Notice that if $p_i$ did not perform the
$r$th atomic step then $\OO{p_i}{r} = \OO{p_i}{r-1}$ and therefore
$\cc_{p_i}(r) = \cc_{p_i}(r-1)$. Consider the value of
$\cc_{p_i}(r)$ when $p_i$ did perform the $r$th atomic step
(during round $k+1$).
\begin{eqnarray*}
  \cc_{p_i}(r) & = & \OO{p_i}{r} - \uu_{p_i}\\
             & = & w_{p_i,p_i}\cdot \II{p_i}{r'} + \sum_{j \neq i}w_{p_i, p_j} \cdot \OO{p_j}{r_j'} - w_{p_i,p_i}\cdot\vv_{p_i} - \sum_{j \neq i}w_{p_i, p_j}\cdot\uu_{p_i}\\
             & = & w_{p_i,p_i}\cdot (\II{p_i}{r'}-\vv_{p_i}) + \sum_{j \neq i}w_{p_i, p_j} \cdot (\OO{p_j}{r_j'}-\uu_{p_i}) \\
             & = & w_{p_i,p_i}\cdot (\II{p_i}{r'}-\vv_{p_i}) + \sum_{j \neq i}w_{p_i, p_j} \cdot \cc_{p_j}(r_j')\;,
\end{eqnarray*}
where $r'$ and the different $r_j'$ are smaller than $r$ and are
all from round $k$ or round $k+1$.

By using \lemmaref{claim:diag} we get:
\begin{eqnarray*}
  |\cc_{p_i}(r)| & \leq & |w_{p_i,p_i}\cdot (\II{p_i}{r'}-\vv_{p_i})| + \max_{p_j}|c_{p_j}(r_j')| \sum_{j \neq i}|w_{p_i, p_j}| \\
               & \leq & |w_{p_i,p_i}\cdot (\II{p_i}{r'}-\vv_{p_i})| + \normi{B}|\cc_{p_{max}}(r_{max})| \\
               & \leq & \delta + \normi{B}|\cc_{p_{max}}(r_{max})|\;,
\end{eqnarray*}
for some $p_{max}$ and $r_{max} \leq r$ that is from round $k$ or
$k+1$.

Therefore, for any $p_i$ during round $k+1$ there is a list of
length $\ell \geq k$ of nodes $p_1, p_2, \dots, p_\ell$ and a
sequence of length $\ell$ of atomic steps $r_1 > r_2 > \dots >
r_{\ell} = 0$, such that
\begin{eqnarray*}
  |\cc_{p_i}(r)| & \leq & \delta + \normi{B}|\cc_{p_1}(r_1)|\\
                     & \leq & \delta + \normi{B}(\delta + \normi{B}|\cc_{p_2}(r_2)|)\\
                     & = & \delta \cdot (1 + \normi{B}) + \normi{B}^2|\cc_{p_2}(r_2)|\\
                     & \leq & \delta \cdot \sum_{z=0}^{\ell-1}\normi{B}^z + \normi{B}^\ell|\cc_{p_\ell}(r_\ell)|\\
                     & = & \delta \cdot \frac{1-\normi{B}^\ell}{1-\normi{B}} + \normi{B}^\ell|\cc_{p_\ell}(0)|\;.
\end{eqnarray*}

Denote by $c_1 \triangleq
\frac{1}{1-\normi{B}}$, and $c_2 \triangleq \normi{B}$. We have
that for node $p_i$ performing the $r$th atomic step during round
$k$ it holds that $|\cc_{p_i}(r)| \leq \delta \cdot c_1 + c_2^\ell
\cdot z \leq \delta \cdot c_1 + c_2^k \cdot z$.
\qed\end{proof}

In fair runs,
there are infinitely many rounds $k$, thus, as $l$ and $r$ go to
infinity, we have that $\normi{\Or{r}}$ is bounded by a hypercube of
length $\delta \cdot c_1$ around $\uu$.

\section{Discussion}\label{sec:discussion}
We have shown that the algorithm \syncAlg is a modification of the
Jacobi iterative method to solve a set of equations $A\xx=\bb$,
where $A$ is given and $\bb$ is dynamically changing but bounded.
Moreover, \theoremref{theorem:main} is a generalization of
previous analysis of Jacobi's convergence. Our motivation for
\syncAlg originates from the sensor calibration problem where
sensors need to calibrate their noisy readings. Unlike previous
approaches to this problem, we assume a dynamic system with an
infinite execution of the algorithm. In this setting the readings
of the sensors continuously change. Under the assumption that the
readings' changes are bounded, we have shown that the calibrated
output is bounded as well.

Further application for \syncAlg can be found in any setting where
it is desired to solve $A\xx=\bb$ in a converging and
self-stabilizing manner, while $A$ is given, and $\bb$ may change
slightly from one round to the next. Notice that the analysis
given in
\sectionref{sec:analysis} holds in such a system.

As noted in \remarkref{rem:partofcode} the matrix $A$ is ``part of
the code''. An optional alternative to the current solution is to
compute $A^{-1}$ (the inverted matrix of $A$) beforehand and
include it ``as part of the code''. Thus, each node could locally
solve $\xx = A^{-1}\bb$, and it can be shown that $\xx$ will be
bounded (as long as $\bb$ is bounded). The main problem with such
a solution is the connectivity requirements it incurs. In our
solution, scalar values are sent in the network only between
direct neighbors. The matrix $W$ represents a weighted adjacency
graph. Once inverted, the matrix $W^{-1}$ might not be sparse.
A non-zero entry $w^{-1}_{ij} \in W^{-1}$ means that node $p_i$ needs
to communicate with node $p_j$. This extra communication might cause the algorithm to lose its self-stabilizing properties, as
non-neighboring nodes would require a self-stabilizing overlay
network for their communication.

The assumption of a predefined $A$ is suitable for static networks
in which the communication graph is predetermined. For dynamic
networks, it would be interesting to adjust \syncAlg to discover
the connectivity of the network, inferring the optimal weights dynamically.
We assume that after the weights are calculated, the topology of the
sensor network remains stable, thus the convergence analysis of \sectionref{sec:analysis} should hold.

\subsection{Relation to Perturbation Theory}
A large amount of research focused on the problem of solving $A\xx=\bb$ when $A$ and
$\bb$ are not exactly known. That is, let $\hat{A}=A+\delta A$ and
$\hat{\bb} = \bb+\delta \bb$, and consider the equation
$\hat{A}\hat{\xx}=\hat{\bb}$; what can be said about $\xx$ in
relation to $\hat{\xx}$?

Our setting is ``easier'' in one sense, and ``harder'' in a
different sense. In our setting $A$ is known, \ie $\delta A
= 0$. However, $\hat{\bb}$ is not well defined. That is, the input
vector - which is described by $\hat{\bb}$ - changes over time.
When solving $\hat{A}\hat{\xx}=\hat{\bb}$ it is assumed that there
is some $\bb$ that is \emph{constant} but it was measured with an
error. In our case, $\bb$ is not constant as it changes over time,
while it represents the measurements correctly.

As a future research, it would be interesting to consider the
implications of adding inaccuracy to the measurements. The vast
body of knowledge regarding perturbation theory would definitely
aid in this extension to our model.

\subsection{Relation to convex optimization}
Many practical optimization problems are given in the quadratic
form $f(x) = 1/2\xx A\xx - \bb^T\xx$, where the task is to compute
$\min_{\xx} f(\xx)$ distributively over a communication network. A
survey showing several applications can be found in \cite{PPNA08}.
Example applications are monitoring, distributed computation of
trust and ranking of nodes and data items.

A standard way for solving $\min_{\xx} f(\xx)$ is by computing the
derivative and comparing it to zero to get the global optimum.
When the matrix $A$ is symmetric, $f'(\xx) = A\xx -\bb = 0$, and we get
a linear system of equations $A\xx = \bb$. In other words, the convex
optimization problem is reduced into a solution of a linear system
of equations.

Interior point methods \cite[Ch. 11]{BV04} solve linear programming
problems by applying Newton method iteratively. Each computation
of the Newton step involves a solution of a linear systems of
equations. An area of future work is to examine the applications
of our self-stabilizing algorithm to these methods. The
difficulties arise from the fact that the matrix A needs to be
recomputed between iterations, so nodes need to be synchronized
and aware of the current iteration taking place.

\section*{Acknowledgements}
The authors would like to thank Golan Pundak for
assisting with the simulations, and the anonymous reviewers
for their helpful comments.
\bibliographystyle{plain}
\bibliography{SSS08}

\end{document}